\documentclass[aps,pra,twocolumn,10pt,superscriptaddress,nofootinbib,longbibliography]{revtex4-2}
\usepackage{amsmath}
\usepackage{amsthm}
\usepackage{txfonts}
\usepackage{graphicx}
\usepackage{bm}
\usepackage{tikz}
\usetikzlibrary{arrows.meta,positioning,decorations.pathmorphing}
\usepackage[colorlinks=true,linkcolor=blue,citecolor=blue,urlcolor=blue]{hyperref}
\usepackage{ragged2e}
 
\newtheorem{theorem}{Theorem}
\newtheorem{lemma}{Lemma}
\newtheorem{corollary}{Corollary}
\theoremstyle{remark}
\newtheorem{remark}{Remark}
 
\newcounter{smsec}

\newcommand{\smref}[1]{\hyperref[#1]{Appendix~\ref*{#1}}}
 
\newcommand{\tr}{\mathrm{Tr}}
\newcommand{\N}{\mathcal{N}}
\newcommand{\Pset}{\mathcal{P}}
\newcommand{\G}{\mathcal{G}}
\newcommand{\id}{\mathbb{I}}
\newcommand{\R}{\mathbb{R}}
\newcommand{\norm}[1]{\left\lVert #1 \right\rVert_1}
\newcommand{\ket}[1]{|#1\rangle}
\newcommand{\bra}[1]{\langle #1|}
 
\begin{document}
 
\title{Exact Catalysis Cannot Overcome the Gaussian-Steering Barrier for Remote Wigner Negativity}
 
\author{Sudip Chakrabarty}
\email{sudip27042000@gmail.com}
\affiliation{S. N. Bose National Centre for Basic Sciences, Block JD, Sector III, Salt Lake, Kolkata 700106, India}
 
\begin{abstract}
Gaussian steering sets a sharp threshold for creating Wigner negativity at a distance. A measurement by Bob can make Alice's part of a shared Gaussian state Wigner negative only if Alice can steer him with Gaussian measurements. We ask whether copies, ancillas, or catalysts can lift this requirement when Alice is restricted to Gaussian operations. We introduce the steered negativity, a multiplicative monotone whose logarithm equals the Gaussian steerability for Gaussian states. It bounds the negativity that Bob can herald and rules out activation by copies, by ancillas of Bob, and by catalysts without remote negativity of their own. For a strictly unsteerable Gaussian state and a single round of feed-forward to a Gaussian unitary of Alice, we further prove that even a Wigner-negative or steerable catalyst cannot help if it is returned exactly. Gaussian steering thus remains a barrier to remote Wigner negativity beyond the single-copy setting.
\end{abstract}
 
\maketitle
 
\section{Introduction} 
Wigner negativity is a hallmark of nonclassicality in quantum systems. It is necessary for continuous-variable quantum computation beyond efficient classical simulation~\cite{Mari2012} and underlies a resource theory in which Gaussian states and operations form the free
sector~\cite{Kenfack2004,Albarelli2018,Takagi2018}. In practice, Wigner
negativity is typically generated by a non-Gaussian operation such as photon
subtraction~\cite{Ourjoumtsev2006,Ra2020,Walschaers2021}. The non-Gaussian
operation, however, need not act where the negativity is ultimately needed.
When two parties share a Gaussian state, a measurement by Bob can make
Alice's subsystem Wigner negative precisely when Alice can steer Bob with
Gaussian measurements~\cite{Walschaers2020PRL,Walschaers2020PRXQ}. This
connection has been extended beyond Gaussian states~\cite{Walschaers2023},
studied for photon subtraction~\cite{Xiang2022}, and demonstrated
experimentally~\cite{Liu2022}. Related measurement-induced activation
mechanisms have also been studied for other forms of nonclassicality
~\cite{chakrabarty2026measurementinducedremoteactivationnonclassicality}.
Thus, Gaussian steering establishes a sharp boundary for remote Wigner
negativity in the single-copy setting.
 
Whether this boundary survives additional resources is less clear. For Bell
nonlocality, several copies can activate states that are useless one at a
time~\cite{Palazuelos2012,Navascues2011,Cavalcanti2011,Brunner2014},
and steering can likewise be superactivated~\cite{Quintino2016,Hsieh2016}.
Catalysis provides another route. A resource that is inaccessible on its own
can become useful with the help of a catalyst that is returned to its initial
state, as shown recently for Bell nonlocality~\cite{Bavaresco2025}. Such effects
occur in entanglement and other resource
theories~\cite{Jonathan1999,LipkaBartosik2021,Kondra2021,Datta2023,LipkaBartosik2024},
as well as for magic states~\cite{Campbell2011}. Catalytic generation of Wigner
negativity itself has been demonstrated in a cavity-QED setting, where a field
becomes Wigner negative while the atom returns exactly to its initial
state~\cite{deOliveiraJunior2024}. That mechanism relies on a non-Gaussian
light-matter interaction and is not remote. These observations raise a natural
question. \emph{Can multiple copies, ancillary systems, or a resourceful catalyst
overcome the Gaussian-steering barrier for remote Wigner negativity?}
 
We answer this question for one-sided Gaussian protocols, in which Alice is
restricted to Gaussian operations and Bob is unrestricted. Our starting point
is an operator-valued Wigner function that associates an operator on Bob's
system with each phase-space point of Alice. Its integrated trace norm defines
the \emph{steered negativity}. We show that it is multiplicative under tensor
products and nonincreasing under one-sided Gaussian protocols. For Gaussian
states, its logarithm coincides with the Gaussian
steerability~\cite{Kogias2015}. The steerability is known to be a monotone
when both parties are Gaussian~\cite{Lami2016}. Here it bounds the Wigner
negativity that an unrestricted Bob can herald. Multiplicativity
rules out activation by multiple copies, by arbitrary ancillas on Bob's side,
and by catalysts that are themselves remotely positive. These statements rest
on an elementary closure property of remotely positive states.
 
The stronger question concerns a catalyst that already contains the
resource. Such a catalyst can transfer its negativity if it is consumed. Can it
do so while being returned exactly? We prove that the answer is no for a
strictly unsteerable Gaussian resource and for single-round catalytic protocols in which
Bob's outcome is fed forward to a Gaussian unitary of Alice. This holds even
when the catalyst is Wigner negative, entangled, or steerable, and even when
exact return is required only on average. The proof rests on a rigidity of the
trace norm. Exact return forces every inequality behind the monotonicity of the
steered negativity to be saturated, and saturation leaves no negativity for
Alice. This is the main result of this work. Within this class of protocols, Gaussian steering is not merely a single-copy threshold.

\section{Setting}
\emph{Operator-valued Wigner function.} 
Alice holds $n$ bosonic modes with quadratures collected in a phase-space point $x$ of $\R^{2n}$. Bob holds an arbitrary quantum system. We set $\hbar$ to one, so that the vacuum has covariance matrix one half. The Wigner function of a state $\rho$ of $n$ modes is the expectation value of the Wigner operator \cite{Cahill1969},
\begin{align}
W_\rho(x) &= \tr\left[\rho\,\Delta(x)\right], &
\Delta(x) &= \pi^{-n} D(x)\,\Pi\, D(x)^\dagger ,
\end{align}
where $D(x)$ is the displacement operator and $\Pi$ the parity. For a bipartite state we apply the Wigner operator on Alice's side only and keep Bob's system as an operator,
\begin{align}
\sigma_B(x) &= \tr_A\left[\left(\Delta(x)\otimes \id_B\right)\rho_{AB}\right]. \label{eq:sigma}
\end{align}
This is a Hermitian trace-class operator on Bob's space for every $x$. It has three simple properties,
\begin{align}
\tr\,\sigma_B(x) &= W_{A}(x), \nonumber\\
\int dx\, \sigma_B(x) &= \rho_B, \nonumber\\
\tr\left[M \sigma_B(x)\right] &= p_M\, W_{A|M}(x). \label{eq:props}
\end{align}
In the last line $M$ is an effect of a measurement by Bob, $p_M$ is its probability, and $W_{A|M}$ is the Wigner function of Alice's conditional state. The last line shows that $\sigma_B$ contains everything that Bob's measurements can do to Alice's Wigner function. Throughout, we call a Wigner function or an operator positive when it has no negative part.
 
\begin{lemma}[Remote positivity criterion~\cite{Walschaers2023}]\label{lem:P}
No measurement by Bob can make Alice's conditional state Wigner negative if and only if
\begin{align}
\sigma_B(x) &\ge 0 \quad \text{for all } x. \label{eq:P}
\end{align}
\end{lemma}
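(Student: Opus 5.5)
The proof rests on the third identity in \eqref{eq:props}, $\tr[M\sigma_B(x)] = p_M W_{A|M}(x)$, valid for every effect $M$ of Bob. The plan is to prove the two directions of Lemma~\ref{lem:P} separately, with all the work done by that identity and by the duality between positive operators and effects. Here a general measurement by Bob means any POVM, possibly with many outcomes, so every conditional state comes from a single effect $0\le M\le \id_B$ with $p_M>0$. "Wigner negative" means $W_{A|M}(x)<0$ at some $x$.

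For sufficiency, assume $\sigma_B(x)\ge 0$ for all $x$. Then for any effect $M\ge 0$ the trace of a product of two positive operators is nonnegative, so $p_M W_{A|M}(x) = \tr[M\sigma_B(x)]\ge 0$. Dividing by $p_M>0$ gives $W_{A|M}\ge 0$ everywhere. This is immediate once we know $\sigma_B(x)$ is trace class, which the paper notes, since $\Delta(x)$ is bounded with operator norm $\pi^{-n}$.

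For necessity, assume $\sigma_B(x_0)$ has a negative eigenvalue at some point $x_0$. Since $\sigma_B(x_0)$ is Hermitian and trace class, the spectral theorem gives a normalized eigenvector $\ket{\psi}$ with $\bra{\psi}\sigma_B(x_0)\ket{\psi}<0$. I would take Bob's two-outcome measurement $\{M,\id_B-M\}$ with $M=\ket{\psi}\bra{\psi}$. Then $p_M W_{A|M}(x_0)<0$.

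It remains to check that $p_M>0$, so that the conditional state exists. This holds because $p_M = \tr[M\rho_B] = \int dx\, \bra{\psi}\sigma_B(x)\ket{\psi}$ by the second identity. If $p_M=0$, then $\rho_B\ket{\psi}=0$, and hence $\tr_A[(X\otimes\ket{\psi}\bra{\psi})\rho_{AB}]=0$ for every bounded $X$, by positivity of $\rho_{AB}$ (a Cauchy--Schwarz argument). Taking $X=\Delta(x_0)$ would then force $\bra{\psi}\sigma_B(x_0)\ket{\psi}=0$, a contradiction. So $p_M>0$, and $W_{A|M}(x_0)<0$: the conditional state is Wigner negative.

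The main obstacle is this positivity-of-probability check, together with the continuity and well-definedness issues in infinite dimensions: $\sigma_B$ must be defined pointwise, and the integral $\int\sigma_B=\rho_B$ must converge in trace norm. I would handle these by the Cauchy--Schwarz argument above and boundedness of $\Delta(x)$. If pointwise statements are delicate, I would note that $W_{A|M}$ is continuous, which it is because $\Delta(x)$ is strongly continuous in $x$. Since the paper's positivity is meant pointwise (or almost everywhere), this suffices.
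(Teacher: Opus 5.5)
Your proof is correct and follows the paper's argument. Sufficiency comes from $\tr[M\sigma_B(x)]\ge 0$ for every effect $M$. Necessity takes $M$ to be the projector onto an eigenvector $\ket{\psi}$ of $\sigma_B(x_0)$ with negative eigenvalue, and excludes $p_M=0$ because $\rho_{AB}$ would then have no support on $\ket{\psi}$; your Cauchy--Schwarz step just writes out the paper's one-line support claim.
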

 
We call a state that obeys Eq.~\eqref{eq:P} remotely positive and write $\Pset$ for the set of such states. For such a state $\sigma_B(x)$ is an ensemble of Bob's states labelled by a classical variable $x$ with the true probability density $W_A(x)$. This is a local hidden state model in which the hidden variable is Alice's phase-space point. Reference~\cite{Walschaers2023} phrases Eq.~\eqref{eq:P} as the physicality of the conditional Wigner function. We give a short proof in \smref{app:conv}.
 
For a Gaussian state with covariance matrix
\begin{align}
V_{AB} &= \begin{pmatrix} V_A & C \\ C^{T} & V_B \end{pmatrix},
\end{align}
Bob's operator $\sigma_B(x)$ is a displaced Gaussian operator with covariance matrix given by the Schur complement
\begin{align}
V_{B|A} &= V_B - C^{T} V_A^{-1} C . \label{eq:schur}
\end{align}
It is positive exactly when this matrix is a legitimate covariance matrix, which is the condition that Alice cannot steer Bob with Gaussian measurements \cite{Wiseman2007},
\begin{align}
V_{AB} + \frac{i}{2}\left(0_A \oplus \Omega_B\right) &\ge 0 . \label{eq:steer}
\end{align}
Here $\Omega_B$ is the symplectic form of Bob's modes. For Gaussian states, Lemma~\ref{lem:P} coincides with the Gaussian steering criterion~\cite{Walschaers2020PRL}. Throughout, steering refers to Gaussian measurements by Alice. Non-Gaussian measurements can reveal steering in states that obey Eq.~\eqref{eq:steer}~\cite{Ji2016}, but Alice does not have them here.

\emph{Allowed operations.} 
Wigner negativity is trivial to create if Alice has a non-Gaussian operation, so Alice is restricted. She can apply Gaussian unitaries, add modes in Gaussian states, discard modes, and perform homodyne or heterodyne measurements, all conditioned on classical information. Bob is not restricted. He can add any ancilla and apply any instrument, that is, a set of completely positive maps $\mathcal{E}_k$ whose sum preserves the trace. Both can send classical messages. We call a protocol built from these steps a one-sided Gaussian protocol. It ends with a classical record $k$ that occurs with probability $p_k$ and a state $\rho_k$ of the modes that Alice keeps. Remote generation of Wigner negativity means that some $\rho_k$ has a Wigner function with a negative part.
 
All results of this work are relative to this class. It contains the protocols of Refs.~\cite{Walschaers2020PRL,Walschaers2020PRXQ,Xiang2022,Liu2022}, in which Alice does nothing, and adds Gaussian processing, feed-forward, and communication in both directions. We take the record $k$ to be discrete. For a continuous record the sums become integrals, and statements for every outcome hold for almost every outcome.

\section{Steered negativity} 
For a single system the size of the negative part of the Wigner function is measured by its integrated absolute value \cite{Kenfack2004,Albarelli2018}. It can also be certified operationally from few-copy measurements of moments of the Wigner function~\cite{chakrabarty2025,chakrabarty2026operationalcertificationnonclassicalityarbitrary}. Equation~\eqref{eq:props} suggests the analogous quantity for the remote task, with the absolute value replaced by the trace norm on Bob's space,
\begin{align}
\N(\rho_{AB}) &= \int dx \, \norm{\sigma_B(x)} . \label{eq:N}
\end{align}
We call it the steered negativity of Alice by Bob. It relates to the Wigner function in the way the negativity of Vidal and Werner relates to the partial transpose \cite{Vidal2002}. The next theorem shows that it is a monotone for one-sided Gaussian protocols. The proof is in \smref{app:thm1}.
 
\begin{theorem}[Steered negativity]\label{thm:mono}
The steered negativity has the following properties.
\begin{enumerate}
\item[(i)] It is at least one, and it equals one if and only if the state is remotely positive.
\item[(ii)] It is multiplicative,
\begin{align}
\N(\rho_{AB}\otimes\tau_{A'B'}) &= \N(\rho_{AB})\,\N(\tau_{A'B'}). \label{eq:mult}
\end{align}
\item[(iii)] It is convex.
\item[(iv)] It does not increase on average under one-sided Gaussian protocols,
\begin{align}
\sum_k p_k\, \N(\rho_k) &\le \N(\rho_{AB}). \label{eq:mono}
\end{align}
\item[(v)] If Bob's system is trivial it reduces to the integrated absolute value of the Wigner function,
\begin{align}
\N(\rho_A) &= \int dx\, |W_A(x)| .
\end{align}
\end{enumerate}
\end{theorem}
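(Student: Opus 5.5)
We treat the five items in order, using only the definition of $\sigma_B(x)$ in Eq.~\eqref{eq:sigma}, the properties in Eq.~\eqref{eq:props}, and Lemma~\ref{lem:P}.

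\emph{Items (i), (v) and (iii).} For (i), we use $\norm{\sigma_B(x)}\ge |\tr\,\sigma_B(x)| \ge \tr\,\sigma_B(x)$ and integrate. The second line of Eq.~\eqref{eq:props} gives $\int dx\,\tr\,\sigma_B(x)=\tr\rho_B=1$, so $\N\ge 1$. Equality forces $\norm{\sigma_B(x)}=\tr\,\sigma_B(x)$ for almost every $x$, which holds iff $\sigma_B(x)\ge0$. Continuity of $x\mapsto\sigma_B(x)$ in trace norm, inherited from strong continuity of $D(x)$, upgrades ``almost every'' to ``every''. Lemma~\ref{lem:P} then identifies this with remote positivity. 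Item (v) is immediate, since for trivial $B$ we have $\sigma_B(x)=W_A(x)$ and the trace norm is $|\cdot|$. For (iii), $\sigma_B$ is linear in $\rho_{AB}$, and the triangle inequality for $\norm{\cdot}$ integrated over $x$ gives convexity.

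\emph{Item (ii).} The Wigner operator factorizes, $\Delta_{AA'}(x,x')=\Delta_A(x)\otimes\Delta_{A'}(x')$, because displacement and parity factorize over modes. Hence $\sigma_{BB'}(x,x')=\sigma_B(x)\otimes\tau_{B'}(x')$. The trace norm is multiplicative on tensor products, since the singular values of $X\otimes Y$ are the products of the singular values of $X$ and $Y$. Fubini then gives Eq.~\eqref{eq:mult}.

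\emph{Item (iv).} We decompose a one-sided Gaussian protocol into elementary steps. By conditioning on the classical record and using convexity (iii), it suffices to show that each step does not increase $\N$ on average.

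\begin{enumerate}
\item[(a)] \emph{Bob's instrument $\{\mathcal{E}_k\}$, including ancillas.} The instrument acts only on Bob, so it commutes with $\sigma_B$. The output operator is $\mathcal{E}_k(\sigma_B(x))$ with probability $p_k$, and the normalized state has operator $\mathcal{E}_k(\sigma_B(x))/p_k$. Hence $\sum_k p_k\N(\rho_k)=\int dx\sum_k\norm{\mathcal{E}_k(\sigma_B(x))}$. We split $\sigma_B=P-Q$ with $P,Q\ge0$ and $\tr P+\tr Q=\norm{\sigma_B}$. Each $\mathcal{E}_k$ is positive, so $\sum_k\norm{\mathcal{E}_k(\sigma)}\le\sum_k\tr\,\mathcal{E}_k(P+Q)=\norm{\sigma}$ by trace preservation of the sum. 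Ancillas are the special case of a tensor product with a state that has trivial $A$, for which $\N=1$ by (ii) and (i).
\item[(b)] \emph{Alice's Gaussian unitary $U_S$.} The unitary acts covariantly, $U_S^\dagger\Delta(x)U_S=\Delta(Sx+d)$, so $\sigma_B$ is reparametrized by an affine symplectic map with unit Jacobian, and $\N$ is invariant.
\item[(c)] \emph{Adding modes in Gaussian states.} This multiplies $\N$ by the integrated $|W|$ of a Gaussian, which equals one.
\item[(d)] \emph{Discarding a mode.} The reduced operator is the marginal, $\sigma_B(x_1)=\int dx_2\,\sigma_B(x_1,x_2)$, and the triangle inequality applies.
\item[(e)] \emph{Homodyne or heterodyne measurement on a mode.} This is the delicate step. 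Each Gaussian effect $G_m$ has a nonnegative Wigner function $w_m(x_2)$. Using $\tr[G\,X]=\int dx\,W_G\,W_X$ in the appropriate normalization, the unnormalized post-measurement operator is $\int dx_2\,w_m(x_2)\,\sigma_B(x_1,x_2)$. The probability weighting cancels the normalization, and $\int dm\,w_m=1$ pointwise reduces the bound again to the triangle inequality. For homodyne, $w_m$ is a delta on a line, which is still nonnegative.
\end{enumerate}

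Classical feed-forward only selects among these maps according to earlier outcomes, and classical messages to Bob are covered by (a). The average inequality therefore composes along the protocol.

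We expect the main obstacle to be making step (e), and the normalization bookkeeping in (a), rigorous for continuous outcomes and unbounded operators. That requires justifying the Fubini exchanges and the overlap formula for trace-class operator-valued Wigner functions, and we would handle it through a density argument on finite-energy states.
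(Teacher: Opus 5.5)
Your proposal is correct and follows the paper's proof in Appendix~\ref{app:thm1} step by step: trace norm versus trace plus continuity for (i), factorization of the Wigner operator for (ii), and linearity for (iii). For (iv) you give the same branch-by-branch check (Gaussian unitaries, Gaussian ancillas, discarding and homodyne or heterodyne measurement via the triangle inequality, and Bob's instrument via the positive/negative splitting), with only cosmetic differences such as your covariance convention $\Delta(Sx+d)$ versus the paper's $\Delta(S^{-1}(x-d))$; the paper also sidesteps the continuous-outcome technicalities you flag by taking the record to be discrete.
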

 
Property (v) together with Eq.~\eqref{eq:mono} gives a bound on everything that a protocol can deliver. We also obtain a bound for each outcome separately. For this we split $\sigma_B(x)$ into its positive and negative parts and define a positive operator on Bob's side,
\begin{align}
\sigma_B(x) &= \sigma_B^{+}(x) - \sigma_B^{-}(x), &
\Sigma_B &= \int dx\, \sigma_B^{-}(x). \label{eq:Sigma}
\end{align}
Its trace is fixed by the steered negativity,
\begin{align}
\tr\,\Sigma_B &= \frac{\N(\rho_{AB})-1}{2}. \label{eq:trSigma}
\end{align}
 
\begin{corollary}[Negativity budget]\label{cor:budget}
Let a one-sided Gaussian protocol act on $\rho_{AB}$ and produce the Wigner function $W_k$ on Alice's side with probability $p_k$. Then
\begin{align}
\sum_k p_k \int dx\,|W_k(x)| &\le \N(\rho_{AB}). \label{eq:budget}
\end{align}
If the protocol is a single measurement by Bob with effects $M_k$, then for every outcome
\begin{align}
p_k\left(\int dx\,|W_k(x)| - 1\right) &\le 2\,\tr\left[M_k \Sigma_B\right] \le \N(\rho_{AB}) - 1. \label{eq:herald}
\end{align}
\end{corollary}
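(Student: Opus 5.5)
The budget \eqref{eq:budget} follows at once from Theorem~\ref{thm:mono}. Each final state $\rho_k$ lives on Alice's modes alone, since Bob's system is traced out or becomes trivial at the end. Property (v) then gives $\N(\rho_k)=\int dx\,|W_k(x)|$. Substituting this into the monotonicity inequality \eqref{eq:mono} gives \eqref{eq:budget} directly. The only point to check is that discarding Bob's remaining system is itself part of a one-sided Gaussian protocol. It is, because Bob's instruments are arbitrary and include the trace.

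For the per-outcome bound \eqref{eq:herald}, I would work directly with Eq.~\eqref{eq:props}. For a single measurement by Bob with effect $M_k$, the third property gives $p_k W_k(x)=\tr[M_k\sigma_B(x)]$. I would then insert the decomposition $\sigma_B=\sigma_B^+-\sigma_B^-$ from Eq.~\eqref{eq:Sigma}. Since $0\le M_k\le\id$ and both parts are positive, the triangle inequality gives
\begin{align}
p_k|W_k(x)| &\le \tr[M_k\sigma_B^+(x)]+\tr[M_k\sigma_B^-(x)] \nonumber\\
&= p_kW_k(x)+2\,\tr[M_k\sigma_B^-(x)].
\end{align}
Integrating over $x$, the normalization $\int dx\,W_k=1$ and the definition of $\Sigma_B$ give $p_k\int|W_k|\le p_k+2\tr[M_k\Sigma_B]$. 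This is the first inequality in \eqref{eq:herald}.

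For the second inequality, I would use $M_k\le\id$ and $\Sigma_B\ge0$ to get $\tr[M_k\Sigma_B]\le\tr\Sigma_B$. Equation~\eqref{eq:trSigma} then turns this into $(\N-1)/2$. If Eq.~\eqref{eq:trSigma} needs justification, it follows from $\norm{\sigma_B}=\tr\sigma_B^++\tr\sigma_B^-$ and $\tr\sigma_B=W_A$. Integrating and using $\int W_A=1$ gives $\N=1+2\tr\Sigma_B$.

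The only mildly delicate step is the interchange of the $x$-integral with the trace against $M_k$. This is justified because $\int dx\,\norm{\sigma_B(x)}$ is finite for states where $\N$ is finite, so dominated convergence applies. When $\N$ is infinite the bounds are trivial.
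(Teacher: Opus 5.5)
Your proposal is correct and follows essentially the same route as the paper. The budget is Theorem~\ref{thm:mono}(iv) combined with (v). The first inequality in Eq.~\eqref{eq:herald} comes from splitting $\tr[M_k\sigma_B(x)]$ into its traces against $\sigma_B^{\pm}(x)$; you bound $p_k|W_k(x)|$ by their sum, while the paper bounds the negative part of $p_kW_k(x)$ by $\tr[M_k\sigma_B^{-}(x)]$, and the two are the same estimate once $\int dx\,W_k=1$ is used. The second inequality follows from $M_k\le\id$ and Eq.~\eqref{eq:trSigma}, exactly as you have it.
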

 
Equation~\eqref{eq:herald} is a trade-off between how negative a heralded state is and how often it is heralded. The operator $\Sigma_B$ tells Bob which of his outcomes can herald negativity at all. The bound in Eq.~\eqref{eq:budget} is attained by states for which the operators $\sigma_B(x)$ commute for different $x$, for example states that are classical on Bob's side. Bob then measures in the common eigenbasis. For Gaussian states the steered negativity can be evaluated in closed form. The proof is in \smref{app:thm2}.
 
\begin{theorem}[Gaussian states]\label{thm:gauss}
Let $\rho_{AB}$ be a Gaussian state and let $\nu_j$ be the symplectic eigenvalues of the matrix $V_{B|A}$ of Eq.~\eqref{eq:schur}. Then
\begin{align}
\N(\rho_{AB}) &= \prod_j \max\left\{1, \frac{1}{2\nu_j}\right\}, \label{eq:Ngauss}\\
\ln \N(\rho_{AB}) &= \G_{A\to B}(\rho_{AB}), \label{eq:steerability}
\end{align}
where $\G_{A\to B}$ is the Gaussian steerability of Ref.~\cite{Kogias2015}.
\end{theorem}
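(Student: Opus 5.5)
The strategy is to compute $\sigma_B(x)$ explicitly for a Gaussian state, show that its trace norm does not depend on $x$ apart from a Gaussian weight, and reduce the problem to the trace norm of a single Gaussian operator with covariance $V_{B|A}$. First, we write the Wigner operator through its Fourier representation, $\Delta(x)=(2\pi)^{-2n}\!\int d\xi\, e^{-i\xi^T\Omega x} D(\xi)$ (with conventions matching $\hbar=1$). The partial trace then gives $\sigma_B(x)$ as a Fourier integral of $\tr_A[(D(\xi)\otimes\id)\rho_{AB}]$. Since the characteristic function of $\rho_{AB}$ is Gaussian, $\chi(\xi,\eta)=\exp[-\tfrac12 (\xi,\eta)^T \tilde V(\xi,\eta)+i\,\cdot\,\text{mean}]$, completing the square in $\xi$ yields
\[
\sigma_B(x) = G_A(x)\, D_B(\mu(x))\,\tau_B\, D_B(\mu(x))^\dagger .
\]
Here $G_A$ is the normalized Gaussian with covariance $V_A$, and it equals $W_A$. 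The displacement is $\mu(x)=\bar d_B + C^T V_A^{-1}(x-\bar d_A)$. Finally, $\tau_B$ is the unit-trace Hermitian ``Gaussian operator'' whose characteristic function is Gaussian with covariance $V_{B|A}$ from Eq.~\eqref{eq:schur}. This is the standard conditional-Gaussian computation, and it is what the paper asserts before Eq.~\eqref{eq:steer}.

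Because the displacements are unitary, $\norm{\sigma_B(x)}=G_A(x)\norm{\tau_B}$. Integrating over $x$ gives $\N(\rho_{AB})=\norm{\tau_B}$, so everything reduces to the trace norm of a Gaussian operator with a possibly unphysical covariance matrix $V_{B|A}>0$. Positive definiteness holds because it is a Schur complement of the positive matrix $V_{AB}$. Williamson's theorem gives a symplectic $S$ with $SV_{B|A}S^T=\bigoplus_j \nu_j\id_2$. The metaplectic unitary $U_S$ preserves the trace norm, so $\tau_B$ is unitarily equivalent to $\bigotimes_j \tau_{\nu_j}$, where $\tau_\nu$ is the single-mode thermal-like operator with covariance $\nu\id_2$. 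The trace norm is multiplicative over tensor products, so $\N=\prod_j\norm{\tau_{\nu_j}}$.

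For a single mode, $\tau_\nu$ is diagonal in the Fock basis with entries $(1-q)q^k$, where $q=(\nu-\tfrac12)/(\nu+\tfrac12)$; this follows from the thermal-state formula analytically continued in $\nu$. If $\nu\ge\tfrac12$ then $q\in[0,1)$, the operator is a state, and its trace norm is $1$. If $0<\nu<\tfrac12$ then $q<0$ with $|q|<1$, and
\[
\norm{\tau_\nu}=\frac{1-q}{1-|q|}=\frac{1-q}{1+q}=\frac{1}{2\nu}.
\]
This proves Eq.~\eqref{eq:Ngauss}. For Eq.~\eqref{eq:steerability}, we take logarithms and compare with Kogias et al.'s definition, $\G_{A\to B}=\max\{0,-\sum_{j:\nu_j<1/2}\ln(2\nu_j)\}$ in the normalization where the vacuum has covariance $\tfrac12$. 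Since every term $\ln\max\{1,1/(2\nu_j)\}$ is nonnegative, the outer $\max$ is automatic and the two expressions agree. The normalization must be checked carefully against Ref.~\cite{Kogias2015}, which uses vacuum covariance $1$, so its formula reads $\ln \nu_j$ in its own units.

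The main obstacle is the first step: getting the Fourier conventions right so that the conditional covariance comes out exactly as $V_{B|A}$, rather than with stray factors, and making the non-positive $\tau_B$ rigorous. For the latter, I would verify that the Fock-diagonal series converges in trace norm for $|q|<1$, that the metaplectic equivalence holds for Gaussian operators, and not merely for states, by matching characteristic functions, and that $\nu_j>0$ so that $|q|<1$.
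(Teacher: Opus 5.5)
Your proposal is correct and follows the paper's proof essentially step by step: you reduce $\N$ to the trace norm of a single displaced Gaussian operator with covariance $V_{B|A}$, diagonalize it with Williamson's theorem, use the Fock expansion with ratio $(2\nu-1)/(2\nu+1)$ to obtain $1/(2\nu)$, and match Kogias et al.\ after the factor-of-two change of units. The only difference is the first step, where the paper avoids the Fourier conventions you flag as the main obstacle: it notes that $\sigma_B(x)$ is simply the operator whose Wigner function is $W_{AB}(x,\cdot)$, and factorizes $W_{AB}(x,x_B)=W_A(x)\,G_{V_{B|A}}(x_B-C^{T}V_A^{-1}x)$ into a marginal and a conditional Gaussian.
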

 
The Gaussian steerability was introduced as a measure of how far the matrix $V_{B|A}$ is from being a covariance matrix. It does not increase under Gaussian operations of both parties~\cite{Lami2016}. Theorem~\ref{thm:gauss} and Corollary~\ref{cor:budget} show that it also constrains a non-Gaussian Bob. Its exponential is an upper bound on the average negativity that Bob can herald on Alice's side, whatever he measures and whatever Gaussian processing Alice adds. This is a bound, and we do not claim that it is attained for Gaussian states. By Jensen's inequality the average logarithmic negativity of Alice's heralded states is bounded by the steerability itself,
\begin{align}
\sum_k p_k \ln \int dx\,|W_k(x)| &\le \G_{A\to B}(\rho_{AB}). \label{eq:logbound}
\end{align}
Reference~\cite{Xiang2022} found that for photon subtraction a more steerable state can give less negativity, and concluded that purities and not the steerability govern that protocol. Equation~\eqref{eq:logbound} is consistent with this finding and complements it. The steerability does not predict the negativity of a given heralded state. It bounds all of them together.
 
As an illustration we take a two-mode squeezed vacuum with squeezing parameter $r$ whose mode on Alice's side has passed a channel with transmission $\eta$. Equation~\eqref{eq:Ngauss} gives
\begin{align}
\N &= \max\left\{1, \frac{\eta\cosh 2r + 1 - \eta}{\eta + (1-\eta)\cosh 2r}\right\}. \label{eq:lossy}
\end{align}
The steered negativity exceeds one when $\eta$ is larger than one half, which is the known steering threshold \cite{Walschaers2020PRL}. Figure~\ref{fig:budget} compares this budget with the negativity that Bob heralds by counting photons. A numerical integration of Eq.~\eqref{eq:N} in the Fock basis reproduces Eq.~\eqref{eq:lossy}. For the parameters of the figure and no loss the budget for the excess negativity is $0.543$, while photon counting reaches $0.108$ on average. The bound is therefore not attained by photon counting. The reason can be seen from Eq.~\eqref{eq:herald}. To saturate it Bob would have to measure in the eigenbasis of $\sigma_B(x)$, and for a Gaussian state this basis is displaced by an amount that depends on $x$. Whether another measurement comes closer is open.

\begin{figure}[t]
\centering

\begin{tikzpicture}[
    >=Stealth,
    scale=0.8,
    every node/.style={font=\small},
    line width=0.9pt
]

\definecolor{alicepurple}{RGB}{108,92,231}
\definecolor{bobteal}{RGB}{0,166,166}
\definecolor{catalystpink}{RGB}{214,93,177}
\definecolor{softpurple}{RGB}{242,240,253}
\definecolor{softteal}{RGB}{235,249,249}
\definecolor{softpink}{RGB}{252,239,248}
\definecolor{darkslate}{RGB}{74,85,104}

\draw[
    rounded corners=5pt,
    fill=softpurple,
    draw=alicepurple,
    line width=1.2pt
]
(-0.3,-0.45) rectangle (2.7,2.7);

\draw[
    rounded corners=5pt,
    fill=softteal,
    draw=bobteal,
    line width=1.2pt
]
(5.0,-0.45) rectangle (8.0,2.7);

\node[
    font=\bfseries,
    text=alicepurple
]
at (1.2,-0.9)
{Alice, Gaussian};

\node[
    font=\bfseries,
    text=bobteal
]
at (6.5,-0.9)
{Bob, unrestricted};

\node[
    circle,
    draw=alicepurple,
    fill=white,
    line width=1.2pt,
    inner sep=1.5pt,
    minimum size=18pt,
    font=\normalsize\bfseries
]
(A) at (2.1,1.9) {$A$};

\node[
    circle,
    draw=bobteal,
    fill=white,
    line width=1.2pt,
    inner sep=1.5pt,
    minimum size=18pt,
    font=\normalsize\bfseries
]
(B) at (5.6,1.9) {$B$};

\node[
    circle,
    draw=catalystpink,
    fill=softpink,
    line width=1.2pt,
    inner sep=1.5pt,
    minimum size=18pt,
    font=\normalsize\bfseries
]
(Ap) at (2.1,0.4) {$A'$};

\node[
    circle,
    draw=catalystpink,
    fill=softpink,
    line width=1.2pt,
    inner sep=1.5pt,
    minimum size=18pt,
    font=\normalsize\bfseries
]
(Bp) at (5.6,0.4) {$B'$};

\draw[
    decorate,
    decoration={
        snake,
        amplitude=1.2pt,
        segment length=6pt
    },
    draw=alicepurple,
    line width=1.2pt
]
(A)
to node[
    above,
    font=\normalsize\bfseries,
    text=alicepurple
] {$\rho_{AB}$}
(B);

\draw[
    decorate,
    decoration={
        snake,
        amplitude=1.2pt,
        segment length=6pt
    },
    draw=catalystpink,
    line width=1.2pt
]
(Ap)
to node[
    above,
    font=\normalsize\bfseries,
    text=catalystpink
] {$\tau_{A'B'}$}
(Bp);

\draw[
    rounded corners=4pt,
    fill=white,
    draw=bobteal,
    line width=1.3pt
]
(6.4,0.0) rectangle (7.6,2.3);

\node[
    font=\Large\bfseries,
    text=bobteal
]
at (7.0,1.15)
{$\{\mathcal{E}_k\}$};

\draw[
    rounded corners=4pt,
    fill=white,
    draw=alicepurple,
    line width=1.3pt
]
(0.2,0.0) rectangle (1.4,2.3);

\node[
    font=\Large\bfseries,
    text=alicepurple
]
at (0.8,1.15)
{$U_k$};

\draw[
    draw=darkslate,
    line width=0.9pt
]
(B) -- (6.4,1.9);

\draw[
    draw=darkslate,
    line width=0.9pt
]
(Bp) -- (6.4,0.4);

\draw[
    draw=darkslate,
    line width=0.9pt
]
(A) -- (1.4,1.9);

\draw[
    draw=darkslate,
    line width=0.9pt
]
(Ap) -- (1.4,0.4);

\draw[
    ->,
    double,
    draw=darkslate,
    line width=1.0pt
]
(7.0,2.3)
..
controls (7.0,3.5) and (0.8,3.5) ..
node[
    above,
    font=\large\bfseries,
    text=darkslate
] {$k$}
(0.8,2.3);

\draw[
    ->,
    draw=alicepurple,
    line width=1.2pt
]
(0.2,1.9) -- (-0.7,1.9);

\node[
    left,
    xshift=-3pt,
    font=\large\bfseries,
    text=alicepurple
]
at (-0.7,1.9) {$W_k$};

\draw[
    ->,
    draw=catalystpink,
    line width=1.2pt
]
(0.2,0.4) -- (-0.7,0.4);

\node[
    left,
    xshift=-3pt,
    font=\large\bfseries,
    text=catalystpink
]
at (-0.7,0.4) {$\tau$};

\end{tikzpicture}

\caption{\justifying
\small
\emph{The catalytic scenario.}
Alice and Bob share a strictly unsteerable Gaussian state $\rho_{AB}$ and a catalyst
$\tau_{A'B'}$. Bob applies an arbitrary instrument and sends the outcome $k$.
Alice applies a Gaussian unitary that may depend on $k$.
The catalyst must be returned to its initial state.
Theorem~\ref{thm:nocat} states that Alice's output $W_k$ is then Wigner positive.
}
\label{fig:scheme}
\end{figure}

\begin{figure*}[t]
\centering
\includegraphics[width=0.86\textwidth]{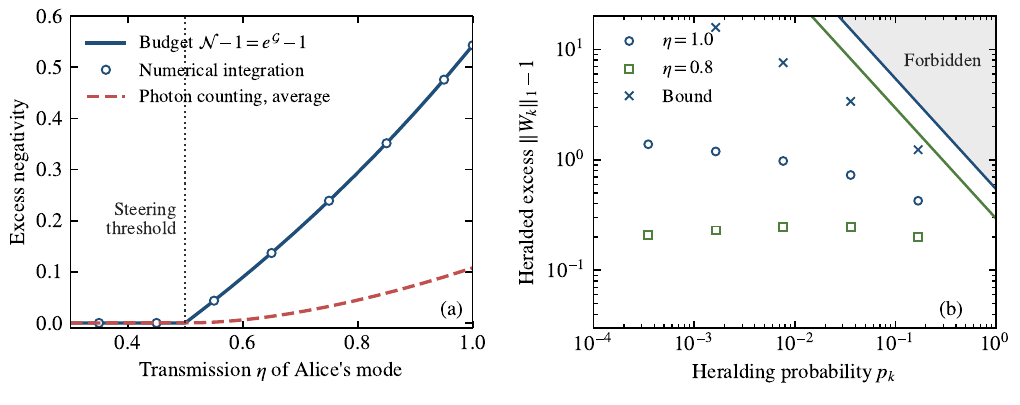}
\caption{Negativity budget for a two-mode squeezed vacuum with $r$ equal to $0.5$ and transmission $\eta$ on Alice's mode. (a) Excess steered negativity from Eq.~\eqref{eq:lossy} (solid line), from numerical integration of Eq.~\eqref{eq:N} (circles), and the average excess negativity that Bob heralds by photon counting (dashed line). (b) Heralding probability and excess negativity of Alice's state for Bob's photon-number outcomes one to six (from right to left). The solid lines are the bound on the right of Eq.~\eqref{eq:herald}, and the shaded region is excluded for all measurements when there is no loss. Crosses show the outcome-resolved bound in the middle of Eq.~\eqref{eq:herald} for the lossless case.}
\label{fig:budget}
\end{figure*}

\section{No activation} 
Bell nonlocality and steering can be activated because the set of states with a local model is not closed under tensor products \cite{Palazuelos2012,Quintino2016}. The set $\Pset$ is closed, since a tensor product of positive operators is positive, and it stays closed under everything that Alice and Bob are allowed to do. The argument is elementary and close in spirit to the simulation of Wigner-positive circuits~\cite{Mari2012}, but its consequence for the remote task has not been stated. The comparison with nonlocality has a limit. Bell nonlocality refers to unrestricted parties, while remote Wigner negativity is defined relative to a restriction on Alice. The following theorem is a direct consequence of Theorem~\ref{thm:mono} and holds for the full class of one-sided Gaussian protocols.
 
\begin{theorem}[No activation]\label{thm:noact}
Let $\rho_{AB}$ and $\tau_{A'B'}$ be remotely positive. Then every state that a one-sided Gaussian protocol produces from $\rho_{AB}\otimes\tau_{A'B'}$ on Alice's side has a positive Wigner function.
\end{theorem}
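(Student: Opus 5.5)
The plan is to derive the statement directly from Theorem~\ref{thm:mono}, as the text announces. By property (i), remote positivity of $\rho_{AB}$ and $\tau_{A'B'}$ means $\N(\rho_{AB})=\N(\tau_{A'B'})=1$. The composite $\rho_{AB}\otimes\tau_{A'B'}$ is a bipartite state with Alice holding $AA'$ and Bob holding $BB'$. By multiplicativity, property (ii), its steered negativity is $\N(\rho_{AB}\otimes\tau_{A'B'})=1$. One can also see this directly: the operator-valued Wigner function of the product factorizes as $\sigma_{BB'}(x,x')=\sigma_B(x)\otimes\sigma_{B'}(x')$, because the Wigner operator on $AA'$ is the tensor product of the Wigner operators. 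A tensor product of positive operators is positive, so the product is remotely positive.

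Next I apply the monotonicity bound, property (iv), to an arbitrary one-sided Gaussian protocol acting on the product state. Each outcome $k$ yields a state $\rho_k$ on Alice's retained modes, with Bob's system trivial. This gives
\begin{align}
\sum_k p_k\,\N(\rho_k) &\le \N(\rho_{AB}\otimes\tau_{A'B'}) = 1 .
\end{align}
Property (i) gives $\N(\rho_k)\ge 1$ for every $k$. Since $\sum_k p_k=1$, the inequality forces $\N(\rho_k)=1$ for every $k$ with $p_k>0$, or for almost every $k$ in the continuous case. Property (v) then turns this into $\int dx\,|W_k(x)|=1=\int dx\,W_k(x)$. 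This holds only if $W_k\ge 0$ almost everywhere, and by continuity of Wigner functions it holds everywhere.

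Two points need care. First, the statement concerns ``every state on Alice's side,'' which could include states where Bob's final system is kept. In that case I would instead use $\N(\rho_k)=1$ with Bob's leftover system nontrivial. Property (i) then says $\rho_k$ is remotely positive, so $\sigma(x)\ge 0$. Its trace $W_k(x)=\tr\,\sigma(x)$, given by Eq.~\eqref{eq:props}, is then nonnegative, and discarding Bob's system gives the same conclusion. Second, outcomes of zero probability are irrelevant and need no treatment.

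The main obstacle is that the argument rests entirely on property (iv). This proposal does not reprove that property; it relies on Theorem~\ref{thm:mono}. The only step needing checking is the bookkeeping: the protocol on the product counts as a one-sided Gaussian protocol on the single bipartite state with Alice's modes $AA'$ and Bob's system $BB'$. This holds because Gaussian operations on $AA'$ and arbitrary instruments on $BB'$ are exactly what the allowed class permits.
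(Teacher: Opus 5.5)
Your proposal is correct and takes the same route as the paper's proof. Unit steered negativity from (i) and multiplicativity (ii) give $\N(\rho_{AB}\otimes\tau_{A'B'})=1$; the budget bound from (iv) and (v) then forces $\int dx\,|W_k(x)|=1$ for every outcome, hence $W_k\ge 0$. Your extra remarks are sound refinements of the same argument: the direct positivity of $\sigma_B(x)\otimes\sigma_{B'}(x')$, the continuity step, and the case in which Bob's system is retained.
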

 
\begin{proof}
By Theorem~\ref{thm:mono}(i) both states have unit steered negativity. By Eq.~\eqref{eq:mult} so does their product. The integrated absolute value of each $W_k$ is at least one, and by Eq.~\eqref{eq:budget} its average is at most one. It therefore equals one for every $k$, which means that $W_k$ is positive.
\end{proof}
 
Three special cases deserve to be stated. First, no number of copies of an unsteerable Gaussian state allows remote Wigner negativity, even if Bob measures all his modes jointly and Alice interferes all of hers. There is no superactivation.
Second, nothing that Bob holds can help. A state of the form $\rho_{AB}\otimes\omega_{B'}$ is remotely positive for every $\omega_{B'}$.
Third, a catalyst can help only if it carries the resource itself. The excluded catalysts include all shared Gaussian states that are unsteerable from Alice to Bob, all separable states whose components on Alice's side are Wigner positive, and every Wigner-positive state held by Alice alone. For such catalysts the result holds whether or not the catalyst is returned.
 
The third point is in contrast with Bell nonlocality, where a Bell-local catalyst activates a Bell-local state \cite{Bavaresco2025}. It leaves one possibility open. A catalyst outside $\Pset$ carries remote negativity or local negativity. Alice can of course take it. If the catalyst is a single photon in a mode of Alice, she can swap it into her output mode. The catalyst is then consumed. The question that remains is whether such a catalyst can transfer negativity to Alice and still be returned intact. This is the catalytic question proper, and we answer it next.

\section{No catalysis} 
We now consider a catalytic subclass of one-sided Gaussian protocols, sketched in Fig.~\ref{fig:scheme}. Alice holds $n$ modes $A$ and one catalyst mode $A'$. Bob holds modes $B$ and an arbitrary system $B'$. The initial state is $\rho_{AB}\otimes\tau_{A'B'}$. Bob applies an instrument with operations $\mathcal{E}_k$ that map $BB'$ to $B'$ and sends $k$ to Alice. Alice applies a Gaussian unitary $U_k$ to $AA'$. With probability $p_k$ the final state of $AA'B'$ is $\omega_k$. The protocol is catalytic if the catalyst is returned exactly,
\begin{align}
\sum_k p_k\, \tr_A\, \omega_k &= \tau_{A'B'}. \label{eq:return}
\end{align}
This form covers more than it seems. Gaussian ancillas of Alice are included in $\rho_{AB}$ as extra modes of $A$. Modes of $A$ that Alice discards or measures by homodyne or heterodyne detection at the end do not matter, because a Wigner-positive state stays Wigner positive under both. \smref{app:thm4} shows that Alice may also start with homodyne or heterodyne measurements on some of her modes and send the results to Bob. The theorem does not cover protocols with further rounds, in which Alice measures modes that have already interacted with the catalyst and Bob acts again on the result.
 
We ask for the return only on average over $k$, which is the weakest form of the requirement. Correlations between the catalyst and Alice's output are allowed. The catalyst is arbitrary. It may be Wigner negative on Alice's side, entangled, or steerable. We only assume that its steered negativity is finite.
 
The theorem needs one assumption on the Gaussian state. We say that it is strictly unsteerable from Alice to Bob if Eq.~\eqref{eq:steer} holds with a strict inequality,
\begin{align}
V_{AB} + \frac{i}{2}\left(0_A \oplus \Omega_B\right) &> 0 . \label{eq:strict}
\end{align}
This holds for every unsteerable Gaussian state that is not on the boundary of the unsteerable set, and for every Gaussian state held by Alice alone. On the boundary the matrix has a zero eigenvalue. An example is the state of Eq.~\eqref{eq:lossy} at transmission one half. The boundary is not covered by the theorem, and it cannot be reached by a limit, because exact return of the catalyst is not a stable condition (see \smref{app:thm4}). Systems of Bob that are not correlated with Alice need not be included in $\rho_{AB}$, since Bob can prepare them himself.
 
\begin{theorem}[No catalysis]\label{thm:nocat}
Let $\rho_{AB}$ be a Gaussian state that is strictly unsteerable from Alice to Bob, and let $\tau_{A'B'}$ be any catalyst with a single mode $A'$ on Alice's side and finite steered negativity. Consider a catalytic protocol in which Bob applies an arbitrary instrument on $BB'$ and sends the outcome $k$ to Alice, Alice applies a Gaussian unitary $U_k$ on $AA'$, and the catalyst is returned exactly in the sense of Eq.~\eqref{eq:return}. Then, for every outcome $k$ that occurs with nonzero probability, the state of Alice's modes $A$ has a positive Wigner function.
\end{theorem}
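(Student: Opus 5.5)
The plan is to combine the monotonicity of Theorem~\ref{thm:mono} with the exact-return condition Eq.~\eqref{eq:return}, so that every inequality in a chain is forced to be an equality, and then to show that these equalities leave no room for negativity on $A$.

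\emph{Step 1: the chain.} Take Alice's side to be $AA'$ and Bob's side $BB'$. By Theorem~\ref{thm:mono}(i) and (ii), and Theorem~\ref{thm:gauss}, $\N(\rho_{AB}\otimes\tau_{A'B'})=\N(\tau)$. The protocol is one-sided Gaussian, so Eq.~\eqref{eq:mono} gives $\sum_k p_k\N(\omega_k)\le\N(\tau)$. Discarding $A$ is also an allowed step for Alice, so $\N(\tr_A\omega_k)\le\N(\omega_k)$. Convexity (iii) together with Eq.~\eqref{eq:return} yields
\begin{align}
\N(\tau)=\N\Big(\sum_k p_k\tr_A\omega_k\Big)\le\sum_k p_k\N(\tr_A\omega_k)\le\sum_k p_k\N(\omega_k)\le\N(\tau).
\end{align}
Since $\N(\tau)$ is finite, all three inequalities are equalities. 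In particular $\N(\tr_A\omega_k)=\N(\omega_k)$ for every $k$ with $p_k>0$. Equality must also hold inside the proof of (iv) for Bob's instrument, and I will record that condition too.

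\emph{Step 2: translate equality into operator alignment.} Write $\sigma^{(k)}_{B'}(x,y)$ for the operator-valued Wigner function of $\omega_k$, with $x$ for $A$ and $y$ for $A'$. The operator function of $\tr_A\omega_k$ is $\int dx\,\sigma^{(k)}(x,y)$. Equality $\norm{\int dx\,\sigma^{(k)}(x,y)}=\int dx\,\norm{\sigma^{(k)}(x,y)}$ for almost every $y$ is the equality case of the triangle inequality for the trace norm. It forces a pair of orthogonal projectors $P_y,Q_y$, independent of $x$, with $\sigma^{(k)\pm}(x,y)$ supported on $P_y$ and $Q_y$ respectively. Equivalently, the Hermitian unitary $S_y=P_y-Q_y$ satisfies $\tr[S_y\sigma^{(k)}(x,y)]=\norm{\sigma^{(k)}(x,y)}\ge0$ for all $x$.

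\emph{Step 3: express $\sigma^{(k)}$ through the inputs.} Before Alice's unitary, the operator function is
\begin{align}
\mathcal{E}_k\big(\sigma_B(x)\otimes\sigma^\tau_{B'}(y)\big).
\end{align}
The Gaussian unitary $U_k$ only applies an affine symplectic map $(x,y)\mapsto S_k(x,y)+d_k$, because the Wigner operator is covariant. Strict unsteerability, Eq.~\eqref{eq:strict}, means that $V_{B|A}$ is a strictly physical covariance matrix. Each $\sigma_B(x)$ is then a faithful displaced Gaussian state of positive weight $W_A(x)$, and the family can be written as a Gaussian convolution with strictly positive noise of the same form. Combining this with Step~2 and the $x$-independence of $S_y$, I aim to show that the $A$-marginal
\begin{align}
W_k(x)=p_k^{-1}\int dy\,\tr\,\sigma^{(k)}(x,y)
\end{align}
equals an integral of $\tr[\,\cdot\,]$ against a positive operator of a positive family. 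This should hold because the negative directions $Q_y$ can only come from the catalyst factor $\sigma^\tau(y')$, and they are fixed along the $A$ direction once the mixing by $S_k$ is undone.

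\emph{Main obstacle.} Step~3 is the hardest. The symplectic map $S_k$ mixes $x$ and $y$, so $x$-independence in the output coordinates is not obviously a statement about the input. My first attempt will use the strict positivity, i.e.\ the full-rank Gaussian smoothing in $x$, to argue analytically. The function $x\mapsto\tr[S_y\,\cdot\,]$ is then real-analytic and Gaussian-smoothed. Alignment holding for a whole open set of displacements should force $\mathcal{E}_k^\dagger(S_y)$ to act trivially on the part correlated with $A$. That reduces $W_k$ to a convolution of $W_A\ge0$ with a positive kernel, which makes $W_k$ Wigner positive. This is also where strictness is essential: on the boundary, $\sigma_B(x)$ loses full rank and the analytic-continuation step fails.
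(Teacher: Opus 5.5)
Your Steps 1 and 2 match the paper's conservation and rigidity steps. Step 3, which carries the whole argument, has a genuine gap.

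Alignment of $\sigma^{(k)}(x,y)$ over the output point $x$ of $A$, at fixed output point $y$ of $A'$, does not suffice by itself. The integral $\int dx\,\sigma^{(k)}(x,y)$ is the $k$-th contribution to the returned catalyst's operator-valued Wigner function. It need not be positive when $\tau$ is remotely negative, so the aligned family has nothing positive to be compared against. The paper supplies two ingredients you are missing.

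\emph{First}, it pushes the equality down to the constituents of the smearing decomposition of Lemma~\ref{lem:smear}, $\sigma^{\rho}_B(x)=\int d\lambda\,q(\lambda)g_\epsilon(x-\lambda)\rho_\lambda$. In its notation (inputs $x,x'$, outputs $y,y'$), Lemma~\ref{lem:align} is applied to the operators $\Theta^{\lambda}_k(x')=\mathcal{E}_k(\rho_\lambda\otimes\sigma^{\tau}_{B'}(x'))$ with index $(k,y,\lambda)$. At fixed $\lambda$, $k$, $y'$ these operators depend on $y$ only through $x'=c_ky+d_ky'+e'_k$, which sweeps a line or the whole plane of the catalyst's input phase space. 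Strictness is used here not for analyticity or faithfulness. It splits off a full-support classical noise $g_\epsilon$ while keeping $\rho_\lambda$ a genuine state, so that ``almost every $y$'' becomes ``almost every $x'$ on the coset''. Your chain does contain this finer equality once the smearing is inserted into the instrument step, but you never formulate or use it.

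\emph{Second}, it uses Lemma~\ref{lem:line}: integrals of $\sigma^{\tau}_{B'}$ along a line or over the plane are positive operators, because marginals of a Wigner function are probability densities. The coset integral of $\Theta^{\lambda}_k$ is then $\mathcal{E}_k(\rho_\lambda\otimes\text{positive})\ge 0$. Compressing it with $Q=\id-P_{y'}$ gives minus the integral of the negative parts, so these vanish, $\Omega_k\ge 0$, and $W_k\ge 0$. The uncoupled case $c_k=0$, which forces $b_k=0$, is treated directly.

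Your proposed substitute does not work as stated. Alignment is a nonlinear constraint on the supports of positive and negative parts. Real-analyticity of $x\mapsto\tr[S_y\sigma^{(k)}(x,y)]$ gives no handle on it, or on $\mathcal{E}_k^\dagger(S_y)$. Even your hoped-for reduction of $W_k$ to a convolution of $W_A$ with a positive kernel would need positivity of some kernel built from the catalyst, which is exactly the line-marginal fact you never invoke.

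A telling symptom is that your sketch nowhere uses the hypothesis that $A'$ is a single mode. In the paper that hypothesis enters precisely in Lemma~\ref{lem:line}, because every nonzero subspace of $\R^2$ is coisotropic. For several catalyst modes, the integral of the Wigner operator over a non-coisotropic coset is not positive, and the argument needs the extra condition of \smref{app:multi}.
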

 
The theorem says nothing about steerable states, for which remote negativity needs no catalyst. The proof is in \smref{app:thm4}. It has three steps.

 The first step is conservation. The steered negativity of the initial state equals that of the catalyst, because of Eq.~\eqref{eq:mult} and because $\rho_{AB}$ is remotely positive. It cannot grow during the protocol, and discarding $A$ at the end cannot make it grow either. The final catalyst is the initial one, so every inequality in this chain is an equality.
 
 The second step is rigidity. The last inequality of the chain is a triangle inequality for the trace norm, taken over all contributions that are added up when $A$ is traced out. Equality holds only if all contributions have their positive parts in one common subspace and their negative parts in the orthogonal complement. Once Alice's unitary couples $A$ to the catalyst mode, the Gaussian noise that $A$ injects has full support, and the contributions come from a whole line or from the whole plane of the catalyst's phase space.
 
 The third step uses a basic fact about Wigner functions. The integral of a Wigner function along a line is a probability density, and over the plane it is a probability. In operator form, the integral of the catalyst's operator-valued Wigner function over a line or the plane is a positive operator. A family of operators with aligned signs whose integral is positive has no negative part. So for every outcome in which Alice couples her modes to the catalyst, Bob's operation has already removed all negative parts, and Alice's output is Wigner positive. For the other outcomes the catalyst plays no role.
 
A simple example shows the mechanism. Let the catalyst be a single photon in mode $A'$ and let Alice mix it with a vacuum mode $A$ on a beam splitter with reflectivity $R$. Mode $A$ leaves in the state of a photon that has passed a channel with transmission $R$, and the catalyst is left as a photon with transmission $1-R$. For a single photon after transmission $\eta$ of at least one half, the integrated absolute value of the Wigner function is
\begin{align}
\int dx\,|W(x)| &= 4\eta\, e^{-(2\eta-1)/2\eta} - 1 . \label{eq:lossyphoton}
\end{align}
For smaller $\eta$ the Wigner function is positive and the integral is one. Alice's mode is negative only if $R$ exceeds one half, and then the catalyst is no longer negative. This is transfer. The negativity is moved from one mode to the other, the catalyst is degraded, and the negativity is never present in both. Theorem~\ref{thm:nocat} shows that no choice of catalyst, Gaussian unitary, or operation by Bob can change this within the catalytic protocols considered here. Three remarks put the theorem in context.
 
\begin{remark}\label{rem:broad}
Broadcasting means that the resource in a state is copied onto a second system while the first keeps it \cite{Barnum1996,Lostaglio2019,Marvian2019}. Our task meets it when Bob is absent. Theorem~\ref{thm:nocat} then states that a Gaussian unitary cannot make modes in a Gaussian state Wigner negative while a second system keeps its state, which is the Wigner-negativity part of the no-broadcasting theorem of Ref.~\cite{Chatterjee2025}, obtained here by a different argument. Our theorem adds the remote setting, with a correlated partner who performs arbitrary non-Gaussian operations and classical feed-forward.
\end{remark}
 
\begin{remark}\label{rem:emb}
Exactness matters. Let Alice hold $m$ catalyst modes in an equal mixture of the states in which the first $j$ modes contain one photon each and the rest are empty, with $j$ running from one to $m$. A cyclic shift of all modes puts a single photon into $A$ with certainty and changes the catalyst by $1/m$ in trace distance. This is the usual embezzling phenomenon \cite{vanDam2003,LipkaBartosik2024}. A catalyst that is only approximately returned can yield a fixed amount of negativity with an error that vanishes with its size. The theorem concerns exact return, as in Refs.~\cite{Bavaresco2025,Jonathan1999}, and gives no robustness bound for a catalyst that is slightly disturbed.
\end{remark}
 
\begin{remark}\label{rem:multi}
\smref{app:multi} extends the theorem to several catalyst modes on Alice's side when the coupling obeys a coisotropic condition, which full-rank couplings satisfy. Degenerate couplings, in which some catalyst modes act as spectators, are not covered.
\end{remark}

\section{Discussion}
We have shown that the Gaussian-steering barrier to remote Wigner negativity persists under the operations considered here. Copies and arbitrary ancillas on Bob's side cannot activate negativity from a remotely positive resource, and an exactly returned resourceful catalyst cannot generate remote Wigner negativity from a strictly unsteerable Gaussian state within the catalytic protocols considered here.
A consumed catalyst can still transfer its negativity. This does not rule out catalytic Wigner-negativity generation in general, which can occur with non-Gaussian interactions~\cite{deOliveiraJunior2024}.
The result has a simple operational origin. For a remotely positive state, Alice's phase-space variable provides a positive model for Bob's conditional state. Gaussian operations transform this variable only through linear maps and averaging, so positivity is preserved under tensor products and the allowed protocols. The steered negativity quantifies the failure of this model. Its multiplicativity gives the many-copy no-go result, and the rigidity of its conservation gives the catalytic one.
 
Several questions remain open. In particular, the optimal average negativity that Bob can herald from a steerable Gaussian state is unknown, as is whether a catalyst can amplify it when the steering constraint is already satisfied. A robust version of Theorem~\ref{thm:nocat}, relating the generated negativity to the disturbance of an approximately returned catalyst, would also be valuable for experiments. Extensions to boundary states, to multi-round protocols, to degenerate multimode couplings, and to discrete Wigner representations~\cite{Veitch2014} provide further directions.
 
\bibliography{remote_negativity}
 
\setcounter{equation}{0}
\setcounter{lemma}{0}
\renewcommand{\theequation}{S\arabic{equation}}
\renewcommand{\theHequation}{S\arabic{equation}}
\renewcommand{\thelemma}{S\arabic{lemma}}
\renewcommand{\theHlemma}{S\arabic{lemma}}
\setcounter{figure}{0}
\renewcommand{\thefigure}{S\arabic{figure}}
\renewcommand{\theHfigure}{S\arabic{figure}}

\appendix

\section{Conventions and proof of Lemma~\ref{lem:P}}\label{app:conv}
 
The Wigner operators of $n$ modes obey
\begin{align}
\int dx\,\Delta(x) &= \id, \qquad
U^\dagger \Delta(x)\, U = \Delta\!\left(S^{-1}(x-d)\right), \label{eq:cov}
\end{align}
where $U$ is a Gaussian unitary with symplectic matrix $S$ and displacement $d$. The second relation says that Gaussian unitaries act on Wigner functions as affine changes of variables with unit Jacobian. The properties in Eq.~\eqref{eq:props} follow from Eq.~\eqref{eq:sigma} and the first relation.
 
To prove Lemma~\ref{lem:P}, assume first that $\sigma_B(x)$ is positive for all $x$. Then the last line of Eq.~\eqref{eq:props} is nonnegative for every effect $M$, so no conditional Wigner function has a negative part. Conversely, assume that $\sigma_B(x_0)$ has a negative eigenvalue with eigenvector $\ket{\phi}$. Let $M$ be the projector on $\ket{\phi}$. Its probability is not zero, since otherwise $\rho_{AB}$ would have no support on $\ket{\phi}$ and $\sigma_B(x_0)\ket{\phi}$ would vanish. By Eq.~\eqref{eq:props} the conditional Wigner function is negative at $x_0$.
 
\section{Proof of Theorem~\ref{thm:mono} and Corollary~\ref{cor:budget}}\label{app:thm1}
 
(i) For a Hermitian operator the trace norm is at least the trace, with equality only for positive operators. With the first line of Eq.~\eqref{eq:props},
\begin{align}
\N(\rho_{AB}) &\ge \int dx\, \tr\,\sigma_B(x) = \int dx\, W_A(x) = 1 .
\end{align}
Equality holds if and only if $\sigma_B(x)$ is positive for almost all $x$. Displacements act continuously on trace-class operators, so $\sigma_B(x)$ is continuous in $x$ and is then positive for all $x$.
 
(ii) The Wigner operator of the modes $AA'$ is the product of the two Wigner operators. Hence the operator-valued Wigner function of a product state is the tensor product $\sigma_B(x)\otimes\sigma_{B'}(x')$, and the trace norm of a tensor product is the product of the trace norms.
 
(iii) The map from the state to $\sigma_B(x)$ is linear and the trace norm is convex.
 
(iv) We check each step of a protocol. We write the state before the step as a classical mixture over the record so far, and by (iii) and linearity it suffices to treat a single branch.
 
A Gaussian unitary on Alice's side changes $\sigma_B(x)$ into $\sigma_B(S^{-1}(x-d))$ by Eq.~\eqref{eq:cov}. The integral in Eq.~\eqref{eq:N} is unchanged.
 
Adding modes in a state with positive Wigner function $g$ changes $\sigma_B(x)$ into $g(x_2)\,\sigma_B(x_1)$. The integral is unchanged because $g$ is a probability density.
 
Discarding the modes with coordinates $x_2$ gives
\begin{align}
\int dx_1 \norm{\int dx_2\, \sigma_B(x_1,x_2)} &\le \int dx_1 dx_2 \norm{\sigma_B(x_1,x_2)} .
\end{align}
 
A homodyne or heterodyne measurement on the modes with coordinates $x_2$ has effects whose Wigner functions are nonnegative. We normalize them to weights $m_j(x_2)$ that add up to one. After outcome $j$ the unnormalized operator is the integral of $m_j(x_2)\sigma_B(x_1,x_2)$ over $x_2$. Summing the norms over $j$ and using the triangle inequality gives again the right-hand side of the previous equation.
 
An instrument on Bob's side with operations $\mathcal{E}_k$ changes $\sigma_B(x)$ into $\mathcal{E}_k(\sigma_B(x))$. For a Hermitian operator $X$ with positive and negative parts $X^{\pm}$,
\begin{align}
\sum_k \norm{\mathcal{E}_k(X)} &\le \sum_k \tr\,\mathcal{E}_k(X^{+}) + \tr\,\mathcal{E}_k(X^{-}) \le \norm{X}, \label{eq:instr}
\end{align}
because the sum of the operations does not increase the trace. Classical communication only makes later steps depend on earlier records, which is covered by treating each branch separately.
 
(v) For trivial $B$ the operator $\sigma_B(x)$ is the number $W_A(x)$.
 
The proof of (iv) uses only that Alice's ancillas and measurement effects have positive Wigner functions. The theorem therefore remains valid if Alice is given any Wigner-positive states and measurements.
 
For the corollary, Eq.~\eqref{eq:budget} is Eq.~\eqref{eq:mono} combined with (v). For Eq.~\eqref{eq:trSigma}, the trace norm of $\sigma_B(x)$ is its trace plus twice the trace of its negative part, and we integrate over $x$. For Eq.~\eqref{eq:herald}, the last line of Eq.~\eqref{eq:props} gives
\begin{align}
p_k W_k(x) &= \tr\left[M_k\sigma_B^{+}(x)\right] - \tr\left[M_k \sigma_B^{-}(x)\right] \nonumber\\
&\ge - \tr\left[M_k \sigma_B^{-}(x)\right].
\end{align}
The negative part of $p_kW_k$ is therefore at most the last trace. Its integral is half the left-hand side of Eq.~\eqref{eq:herald}, and the integral of the trace is the middle expression divided by two. The second inequality in Eq.~\eqref{eq:herald} follows from Eq.~\eqref{eq:trSigma} because $M_k$ is at most the identity.
 
\section{Proof of Theorem~\ref{thm:gauss}}\label{app:thm2}
 
We can take the first moments to vanish. The map from operators on Bob's modes to their Wigner functions is linear and one to one. By Eq.~\eqref{eq:sigma}, $\sigma_B(x)$ is the operator whose Wigner function is $W_{AB}(x,\cdot)$. A Gaussian function of two sets of variables factorizes into its marginal and its conditional,
\begin{align}
W_{AB}(x,x_B) &= W_A(x)\; G_{V_{B|A}}\!\left(x_B - C^{T}V_A^{-1}x\right),
\end{align}
where $G_V$ is the normalized centered Gaussian with covariance matrix $V$. Therefore
\begin{align}
\sigma_B(x) &= W_A(x)\; D(C^{T}V_A^{-1}x)\,\gamma\, D(C^{T}V_A^{-1}x)^\dagger ,
\end{align}
where $\gamma$ is the operator with Wigner function $G_{V_{B|A}}$. The trace norm is invariant under the displacement and $W_A$ integrates to one, so the steered negativity is the trace norm of $\gamma$.
 
The matrix $V_{B|A}$ is real, symmetric, and positive definite. By Williamson's theorem \cite{Williamson1936} there is a symplectic matrix that brings it to the diagonal form with entries $\nu_j$, each appearing twice. The corresponding Gaussian unitary turns $\gamma$ into a tensor product of single-mode operators $\gamma_{\nu_j}$ with isotropic Gaussian Wigner functions of variance $\nu_j$. For $\nu$ at least one half this operator is a thermal state. Its expansion in the Fock basis,
\begin{align}
\gamma_\nu &= \frac{2}{2\nu+1}\sum_{m=0}^{\infty}\left(\frac{2\nu-1}{2\nu+1}\right)^{m} \ket{m}\bra{m}, \label{eq:gamma}
\end{align}
remains valid for all positive $\nu$, since the generating function of the Wigner functions of the Fock states converges whenever the ratio in Eq.~\eqref{eq:gamma} has modulus below one. For $\nu$ below one half the ratio is negative, the odd Fock states carry negative weights, and
\begin{align}
\norm{\gamma_\nu} &= \frac{2}{2\nu+1}\,\frac{1}{1-\frac{1-2\nu}{1+2\nu}} = \frac{1}{2\nu}.
\end{align}
For $\nu$ at least one half the norm is one. The product over the modes gives Eq.~\eqref{eq:Ngauss}. The Gaussian steerability of Ref.~\cite{Kogias2015} is the sum of $-\ln(2\nu_j)$ over the symplectic eigenvalues below one half in our units, which is the logarithm of Eq.~\eqref{eq:Ngauss}.
 
For Eq.~\eqref{eq:lossy} the blocks of the covariance matrix are proportional to the identity up to a sign in $C$, with
\begin{align}
V_A &= \tfrac{1}{2}(\eta\cosh 2r + 1-\eta), \nonumber\\
V_B &= \tfrac{1}{2}\cosh 2r, \nonumber\\
C^{T}C &= \tfrac{1}{4}\,\eta\sinh^2 2r ,
\end{align}
and Eq.~\eqref{eq:schur} gives the single symplectic eigenvalue
\begin{align}
\nu &= \frac{\eta + (1-\eta)\cosh 2r}{2\left(\eta\cosh 2r + 1 - \eta\right)}.
\end{align}
 
\section{Proof of Theorem~\ref{thm:nocat}}\label{app:thm4}
 
The protocol is sketched in Fig.~\ref{fig:scheme}. We need three lemmas, numbered S1 to S3, to prove Theorem~\ref{thm:nocat}.
 
\begin{lemma}[Aligned signs]\label{lem:align}
Let $X(\omega)$ be Hermitian trace-class operators that depend measurably on a point $\omega$ of a measure space with measure $\mu$, and assume that the integral of their trace norms is finite. If
\begin{align}
\norm{\int d\mu(\omega)\, X(\omega)} &= \int d\mu(\omega) \norm{X(\omega)}, \label{eq:equal}
\end{align}
then there is a projector $P$ such that for almost all $\omega$ the positive part $X^{+}(\omega)$ and the negative part $X^{-}(\omega)$ obey
\begin{align}
P X^{+}(\omega) P &= X^{+}(\omega), &
(\id-P) X^{-}(\omega) (\id-P) &= X^{-}(\omega). \label{eq:aligned}
\end{align}
\end{lemma}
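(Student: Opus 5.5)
Let $Y=\int d\mu(\omega)\,X(\omega)$, which is a Bochner integral in the trace class because the trace norms are integrable. I will take $P$ to be the projector onto the closed span of the eigenvectors of $Y$ with nonnegative eigenvalues, that is, the spectral projector $\id_{[0,\infty)}(Y)$. Then $\id-P$ projects onto the strictly negative spectral subspace. The first step writes the trace norm variationally. With $Z=2P-\id$, a self-adjoint unitary with $Z Y\ge 0$,
\begin{align}
\norm{Y} &= \tr[ZY] = \int d\mu(\omega)\,\tr[Z X(\omega)] .
\end{align}
Interchanging trace and integral is allowed because $Z$ is bounded and the integral converges in trace norm.

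The second step is a pointwise inequality. For every $\omega$,
\begin{align}
\tr[Z X(\omega)] &= \tr[Z X^{+}(\omega)] - \tr[Z X^{-}(\omega)] \le \tr X^{+}(\omega) + \tr X^{-}(\omega) = \norm{X(\omega)} .
\end{align}
This holds because $-\id\le Z\le \id$ and $X^{\pm}(\omega)\ge 0$. Combined with the hypothesis in Eq.~\eqref{eq:equal}, the integral of the nonnegative function $\norm{X(\omega)}-\tr[ZX(\omega)]$ vanishes. So it vanishes for almost all $\omega$, and each of the two inequalities above is then an equality separately:
\begin{align}
\tr[(\id-Z)X^{+}(\omega)] &= 0, &
\tr[(\id+Z)X^{-}(\omega)] &= 0 .
\end{align}

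The third step converts these trace conditions into support conditions. Note that $\id-Z=2(\id-P)$ and $\id+Z=2P$. If $A\ge 0$ is trace class and $Q$ is a projector with $\tr[QAQ]=\tr[QA]=0$, then $QAQ=0$. Hence $A^{1/2}Q=0$, so $AQ=QA=0$, and therefore $A=(\id-Q)A(\id-Q)$. Applying this with $Q=\id-P$ and $A=X^{+}(\omega)$ gives $PX^{+}(\omega)P=X^{+}(\omega)$. Applying it with $Q=P$ and $A=X^{-}(\omega)$ gives the second relation in Eq.~\eqref{eq:aligned}.

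The only delicate points are technical, and I expect them to be the main obstacle. One is the measurability of $\omega\mapsto \tr[ZX(\omega)]$ and of $\norm{X(\omega)}$. This follows because $X$ is measurable into the trace class and both maps are continuous there. The other is infinite dimensionality, where the trace-norm duality with a bounded $Z$ and the interchange of trace and integral both need justification. Both are standard for Bochner integrals in the trace class. The choice of $P$ on the kernel of $Y$ is immaterial, and any choice works.
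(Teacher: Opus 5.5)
Your proof is correct and is essentially the paper's argument: the same spectral projector $P$ onto the nonnegative part of the integrated operator, the same operator $2P-\id$ attaining the trace norm, and the same observation that $\norm{X(\omega)}-\tr[(2P-\id)X(\omega)] = 2\tr[(\id-P)X^{+}(\omega)]+2\tr[PX^{-}(\omega)]$ is a nonnegative function with zero integral. Your explicit passage from the vanishing traces to $PX^{+}(\omega)P=X^{+}(\omega)$ via $A^{1/2}Q=0$ just spells out what the paper states in one line.
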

 
We say that operators that obey Eq.~\eqref{eq:aligned} with a common $P$ have aligned signs. For numbers this is the familiar fact that the absolute value of a sum equals the sum of the absolute values only if all terms have the same sign.
 
\begin{proof}
Let $\bar X$ be the integral of $X(\omega)$, let $P$ be the projector on the subspace where $\bar X$ is nonnegative, and let $Y$ be $2P-\id$. Then the trace norm of $\bar X$ equals the trace of $Y\bar X$. With the decomposition of $X(\omega)$ into positive and negative parts,
\begin{align}
\tr\left[Y X(\omega)\right] &= \norm{X(\omega)} - 2\,\tr\left[(\id-P)X^{+}(\omega)\right] \nonumber\\
&\quad - 2\,\tr\left[P X^{-}(\omega)\right].
\end{align}
We integrate over $\omega$. The left-hand side becomes the trace norm of $\bar X$, and by Eq.~\eqref{eq:equal} this equals the integral of the first term on the right. The integral of the two remaining traces therefore vanishes. Both are nonnegative, so they vanish for almost all $\omega$. A positive operator whose trace against a projector vanishes has no support in the range of that projector, which gives Eq.~\eqref{eq:aligned}.
\end{proof}
 
\begin{lemma}[Smearing]\label{lem:smear}
Let $\rho_{AB}$ be a Gaussian state that is strictly unsteerable from Alice to Bob. Then there are a positive number $\epsilon$, a probability density $q$ on $\R^{2n}$, and states $\rho_\lambda$ of $B$ such that
\begin{align}
\sigma^{\rho}_B(x) &= \int d\lambda\; q(\lambda)\, g_\epsilon(x-\lambda)\, \rho_\lambda , \label{eq:smear}
\end{align}
where $g_\epsilon$ is the centered Gaussian density with covariance matrix $\epsilon\,\id$.
\end{lemma}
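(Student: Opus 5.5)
We start from the explicit form of $\sigma_B(x)$ obtained in the proof of Theorem~\ref{thm:gauss}, taking first moments to vanish without loss of generality,
\begin{align}
\sigma^{\rho}_B(x) &= W_A(x)\, D(Kx)\,\gamma\, D(Kx)^\dagger, \qquad K = C^{T}V_A^{-1},
\end{align}
where $\gamma$ is the Gaussian operator with Wigner function $G_{V_{B|A}}$. Strict unsteerability, Eq.~\eqref{eq:strict}, is equivalent by the Schur complement to $V_{B|A} + \tfrac{i}{2}\Omega_B > 0$ with $V_A>0$. Hence there is a $\delta>0$ with $V_{B|A} - \delta\,\id + \tfrac{i}{2}\Omega_B \ge 0$. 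This slack is what we convert into Gaussian smearing in $x$.

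The plan is to write the joint Gaussian Wigner function $W_{AB}(x,x_B)$ as the convolution, in Alice's variable, of a Gaussian of covariance $\epsilon\,\id$ with a Wigner-positive model. Concretely, write $V_{AB} = V'_{AB} + (\epsilon\,\id_A \oplus 0_B)$ and choose $\epsilon>0$ small enough that $V'_{AB}$ still satisfies the unsteerability condition $V'_{AB} + \tfrac{i}{2}(0_A\oplus\Omega_B)\ge 0$ with $V'_A>0$. Strictness of Eq.~\eqref{eq:strict} guarantees this by continuity, since the strict inequality is an open condition. Then $W_{AB}(x,x_B) = \int d\lambda\, g_\epsilon(x-\lambda)\, W'_{AB}(\lambda,x_B)$, because adding independent Gaussian noise on Alice's variables convolves the phase-space function. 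Applying the Wigner-to-operator map on Bob's side, which is linear, gives $\sigma^{\rho}_B(x) = \int d\lambda\, g_\epsilon(x-\lambda)\,\sigma'_B(\lambda)$. Here $V'_{AB}$ is only a real positive matrix and need not be a physical covariance matrix on Alice's side. We only use its Gaussian phase-space function, so this does not matter.

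Next we identify $\sigma'_B(\lambda)$. By the same factorization as in the proof of Theorem~\ref{thm:gauss}, it equals $q(\lambda)\,\rho_\lambda$. Here $q = G_{V'_A}$ is a genuine probability density because $V'_A = V_A - \epsilon\,\id > 0$ for small $\epsilon$. The operator $\rho_\lambda$ is the displaced Gaussian operator with covariance $V'_{B|A}$. By Eq.~\eqref{eq:steer} applied to $V'_{AB}$, it is a legitimate covariance matrix, so $\rho_\lambda$ is a Gaussian state. Substituting gives Eq.~\eqref{eq:smear}.

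The main obstacle is the choice of $\epsilon$: we must show that subtracting $\epsilon\,\id_A$ keeps the Bob-side Schur complement physical. Since $V'_{B|A} = V_B - C^{T}(V_A-\epsilon\,\id)^{-1}C$ depends continuously on $\epsilon$ and $V_{B|A}+\tfrac{i}{2}\Omega_B$ is strictly positive, small $\epsilon$ preserves positivity. In finite dimensions this is immediate. We just need to note that strict positivity of the full matrix gives strict positivity of the Schur complement.
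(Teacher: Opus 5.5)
Your proposal is correct and follows essentially the same route as the paper. Both arguments lower $V_A$ to $V_A-\epsilon\,\id$, use strictness of Eq.~\eqref{eq:strict} through the Schur complement to keep $V'_{B|A}$ a legitimate covariance matrix, take $q$ as the classical Gaussian with covariance $V_A-\epsilon\,\id$ and $\rho_\lambda$ as the Gaussian states with covariance $V'_{B|A}$ displaced by $C^{T}(V_A-\epsilon\,\id)^{-1}\lambda$, and recover $W_{AB}$ by convolving with $g_\epsilon$ in Alice's variable. Your observation that subtracting $\epsilon(\id_A\oplus 0_B)$ from the strictly positive matrix in Eq.~\eqref{eq:strict} keeps it positive for $\epsilon$ below its smallest eigenvalue is only a slightly more explicit form of the paper's continuity step.
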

 
\begin{proof}
The block $V_A$ is positive definite, so Eq.~\eqref{eq:strict} is equivalent to strict positivity of its Schur complement,
\begin{align}
V_{B|A} + \frac{i}{2}\Omega_B &> 0 .
\end{align}
For a positive number $\epsilon$ below the smallest eigenvalue of $V_A$ we define
\begin{align}
V_A' &= V_A - \epsilon\,\id, &
V'_{B|A} &= V_B - C^{T}\left(V_A'\right)^{-1} C .
\end{align}
The matrix $V_A'$ is real, symmetric, and positive definite. It is the covariance matrix of a classical Gaussian density $q$ on $\R^{2n}$. It need not obey an uncertainty relation, and the full matrix obtained from $V_{AB}$ by replacing $V_A$ with $V_A'$ is not the covariance matrix of a quantum state in general. The matrix $V'_{B|A}$ depends continuously on $\epsilon$ and tends to $V_{B|A}$. Since the inequality above is strict, we can choose $\epsilon$ so small that
\begin{align}
V'_{B|A} + \frac{i}{2}\Omega_B &\ge 0 .
\end{align}
Then $V'_{B|A}$ is the covariance matrix of a Gaussian quantum state of $B$. Let $\rho_\lambda$ be this state displaced by $C^{T}(V_A')^{-1}\lambda$. The function $q(\lambda)W_{\rho_\lambda}(x_B)$ is the Gaussian with blocks $V_A'$, $C$, and $V_B$. Convolving it in $\lambda$ with $g_\epsilon$ adds $\epsilon\,\id$ to the first block and gives $W_{AB}(x,x_B)$. Since $\sigma^{\rho}_B(x)$ is the operator with Wigner function $W_{AB}(x,\cdot)$, this is Eq.~\eqref{eq:smear}.
\end{proof}
 
\begin{lemma}[Lines and planes]\label{lem:line}
Let $\tau_{A'B'}$ be a state with $A'$ a single mode, and let $\sigma^{\tau}_{B'}(x')$ be its operator-valued Wigner function. Then the integral of $\sigma^{\tau}_{B'}(x')$ over the plane is a positive operator, and so is its integral along almost every straight line.
\end{lemma}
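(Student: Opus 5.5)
The plan is to reduce the statement to the positivity of marginals of the Wigner function, tested against Bob's operators. For every effect $M$ on $B'$ the scalar function $x'\mapsto \tr[M\sigma^{\tau}_{B'}(x')]$ equals $\tr[(\Delta(x')\otimes M)\tau_{A'B'}]$. This is a real quasiprobability, and after normalization by $\tr[M\tau_{B'}]$ it is the Wigner function of the conditional state of $A'$ given $M$, by the last line of Eq.~\eqref{eq:props}. Positivity of an operator is equivalent to positivity of its expectation in every rank-one projector $M=\ket{\phi}\bra{\phi}$. It therefore suffices to show two things: the integral of a single-mode Wigner function over the plane is nonnegative, and its integral along almost every line is nonnegative. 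Both are applied to the unnormalized Wigner function $p_M W_{A'|M}$.

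For the plane, the first line of Eq.~\eqref{eq:props} and $\int dx'\,\Delta(x')=\id$ give $\int dx'\,\sigma^{\tau}_{B'}(x')=\tau_{B'}\ge 0$. This part is immediate and needs no limiting argument beyond the trace-class convergence ensured by finite steered negativity.

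For a line $\{x_0+t e: t\in\R\}$, I would choose a quadrature direction. Let $\hat q_\theta$ be the quadrature orthogonal to $e$ in the symplectic sense, so that integrating the Wigner function along $e$ at fixed $\hat q_\theta$ value gives its marginal. The integral of $\Delta(x_0+te)$ over $t$ then equals, up to the factor fixed by the normalization in Eq.~\eqref{eq:cov}, the spectral projector density $\ket{s}_\theta\bra{s}_\theta$ of $\hat q_\theta$ at the value $s$ determined by $x_0$. This is the standard Radon-transform identity for the Wigner operator. One can verify it by applying a Gaussian (rotation) unitary through Eq.~\eqref{eq:cov}, which reduces it to the case of a line parallel to the $p$ axis, where $\int dp\,\Delta(q,p)=\ket{q}\bra{q}$. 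Hence
\[
\int dt\,\sigma^{\tau}_{B'}(x_0+te)=\tr_{A'}\bigl[(\ket{s}_\theta\bra{s}_\theta\otimes\id)\tau_{A'B'}\bigr],
\]
which is formally a positive operator on $B'$. It is the unnormalized conditional state of $B'$ given the homodyne outcome $s$.

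The main obstacle is making this rigorous, since $\ket{s}_\theta$ is not normalizable and the line integral of a trace-class-valued function need not converge for every line. I would proceed through Fubini. Finite steered negativity means $\int dx'\,\norm{\sigma^{\tau}_{B'}(x')}<\infty$, so for each direction $e$ the line integral converges absolutely in trace norm for almost every offset. Next I would take the homodyne effects $\int_I ds\,\ket{s}_\theta\bra{s}_\theta$ for intervals $I$. These are bounded positive operators, so the corresponding strip integrals $\int_{\text{strip}}\sigma^{\tau}_{B'}$ are positive operators. By the Lebesgue differentiation theorem, dividing by the strip width and shrinking $I$ gives positivity of the line integral for almost every $s$, as an operator inequality tested against a countable dense set of vectors $\phi$. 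The translations of the family of lines obtained by varying $x_0$ and $e$ give the statement for almost every line.
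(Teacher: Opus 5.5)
Your proposal is correct and follows essentially the same route as the paper. Like the paper, you test the operator against Bob's effects $M$, recognize $\tr[M\sigma^{\tau}_{B'}(x')]$ as an unnormalized Wigner function of $A'$, and use that its plane integral is $\tr[M\tau_{B'}]$ and its line integrals are quadrature probability densities, so that the line integral is the unnormalized homodyne-conditioned state of $B'$. Your extra steps fill in details that the paper's short proof leaves implicit: Fubini under finiteness of $\N(\tau)$ (which the lemma does not state but Theorem~\ref{thm:nocat} supplies), the strip-and-interval argument, and the countable dense set of vectors that takes you from ``for each $M$, almost every line'' to ``almost every line, for all $M$''.
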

 
\begin{proof}
The integral over the plane is the reduced state of $B'$. For a line, let $M$ be an effect on $B'$. By Eq.~\eqref{eq:props} the function $\tr[M\sigma^{\tau}_{B'}(x')]$ is an unnormalized Wigner function of a state of $A'$. Its integral along a line is the probability density of a quadrature and is nonnegative. Since this holds for all $M$, the integral of $\sigma^{\tau}_{B'}$ along the line is a positive operator.
\end{proof}
 
\begin{proof}[Proof of Theorem~\ref{thm:nocat}]
Let $x$ in $\R^{2n}$ and $x'$ in $\R^{2}$ be the phase-space points of $A$ and $A'$ before Alice's unitary, and $y$ and $y'$ the points after it. For outcome $k$ they are related by an affine symplectic map. We write its inverse in blocks,
\begin{align}
x &= a_k y + b_k y' + e_k, &
x' &= c_k y + d_k y' + e'_k. \label{eq:blocks}
\end{align}
The unnormalized operator-valued Wigner function of the final state is, by Eq.~\eqref{eq:cov},
\begin{align}
\Omega_k(y,y') &= \mathcal{E}_k\!\left(\sigma^{\rho}_B(x)\otimes\sigma^{\tau}_{B'}(x')\right) \nonumber\\
&= \int d\lambda\; q(\lambda)\, g_\epsilon(x-\lambda)\,\Theta^{\lambda}_k(x'), \label{eq:Omega}\\
\Theta^{\lambda}_k(x') &= \mathcal{E}_k\!\left(\rho_\lambda\otimes\sigma^{\tau}_{B'}(x')\right),
\end{align}
where we used Lemma~\ref{lem:smear}. The Wigner function of Alice's modes $A$ for outcome $k$ is
\begin{align}
p_k W_k(y) &= \int dy'\; \tr\,\Omega_k(y,y'). \label{eq:Wk}
\end{align}
 
\textit{Conservation.} We apply the Wigner operator of $A'$ to both sides of the return condition, Eq.~\eqref{eq:return}. On the left the trace over $A$ is the integral over $y$, by the first relation in Eq.~\eqref{eq:cov}. This gives
\begin{align}
\sigma^{\tau}_{B'}(y') &= \sum_k \int dy\; \Omega_k(y,y') . \label{eq:returnW}
\end{align}
The map from states to operator-valued Wigner functions is linear and one to one, so Eq.~\eqref{eq:returnW} is equivalent to Eq.~\eqref{eq:return}. Inserting it into the definition of the steered negativity gives
\begin{align}
\N(\tau) &= \int dy' \norm{\sum_k \int dy\; \Omega_k(y,y')} \nonumber\\
&\le \sum_k \int dy'\, dy\, d\lambda\; q(\lambda)\, g_\epsilon(x-\lambda) \norm{\Theta^{\lambda}_k(x')} \nonumber\\
&= \sum_k \int d\lambda\; q(\lambda) \int dx' \norm{\Theta^{\lambda}_k(x')} \nonumber\\
&\le \int d\lambda\; q(\lambda)\int dx' \norm{\sigma^{\tau}_{B'}(x')} = \N(\tau). \label{eq:chain}
\end{align}
The first line is Eq.~\eqref{eq:returnW}, and all integrals are finite because $\N(\tau)$ is. In particular, $\Theta^{\lambda}_k(x')$ is integrable in $x'$ for almost all $\lambda$, and by Fubini's theorem along almost every line. The second line is the triangle inequality for the trace norm, applied for each $y'$ to the sum over $k$ and the integrals over $y$ and $\lambda$. The third line changes variables from $(y,y')$ to $(x,x')$ with unit Jacobian and integrates over $x$. The fourth line is Eq.~\eqref{eq:instr} together with the fact that $\rho_\lambda$ has unit trace norm. The chain starts and ends with the same number, so both inequalities are equalities.
 
\textit{Rigidity.} The second line of Eq.~\eqref{eq:chain} is the integral over $y'$ of an inequality that holds for each $y'$. Since the integrals are equal, the inequality is an equality for almost every $y'$. Fix such a $y'$. Lemma~\ref{lem:align}, applied to the index $\omega=(k,y,\lambda)$ and the measure with density $q(\lambda)g_\epsilon(x-\lambda)$, gives a projector $P_{y'}$ such that the operators $\Theta^{\lambda}_k(x')$ have aligned signs for almost all $(k,y,\lambda)$, with $x'$ given by Eq.~\eqref{eq:blocks}. The density $q(\lambda)g_\epsilon(x-\lambda)$ is a product of Gaussians and is strictly positive for all $y$ and $\lambda$. Sets of measure zero for this measure are therefore the sets of Lebesgue measure zero in $(y,\lambda)$. At fixed $y'$, $k$, and $\lambda$ the operator $\Theta^{\lambda}_k(x')$ depends on $y$ only through $x'$, and the map from $y$ to $x'$ in Eq.~\eqref{eq:blocks} is affine and onto a coset of the range of $c_k$. The exceptional set in $y$ is therefore the preimage of a set in that coset. A preimage under a surjective affine map has measure zero only if the set itself has measure zero in the coset. The alignment therefore holds for almost all points $x'$ of the coset and almost all $\lambda$.
 
Fix $k$. As $y$ runs over $\R^{2n}$ at fixed $y'$, the point $x'$ runs over a coset of the range of $c_k$, which is a subspace of $\R^2$. There are three cases.
 
If the range of $c_k$ is the plane, $x'$ covers the plane. The operators $\Theta^{\lambda}_k(x')$ then have aligned signs for almost all $x'$ and $\lambda$. Their integral over $x'$ is
\begin{align}
\int dx'\; \Theta^{\lambda}_k(x') &= \mathcal{E}_k\!\left(\rho_\lambda\otimes\tau_{B'}\right) \ge 0 .
\end{align}
We write $\Theta^{\lambda,\pm}_k(x')$ for the positive and negative parts and $Q$ for $\id-P_{y'}$. By Eq.~\eqref{eq:aligned} the positive parts are annihilated by $Q$ and the negative parts are left unchanged, so
\begin{align}
Q\left[\int dx'\; \Theta^{\lambda}_k(x')\right] Q &= -\int dx'\; \Theta^{\lambda,-}_k(x'). \label{eq:project}
\end{align}
The left-hand side is the compression of a positive operator and is positive. The right-hand side is minus a positive operator. Both sides therefore vanish. The integral of the positive operators $\Theta^{\lambda,-}_k(x')$ is zero, so they vanish for almost all $x'$, and $\Theta^{\lambda}_k(x')$ is positive for almost all $x'$ and $\lambda$. Only one good value of $y'$ is needed here, because the family of operators is the same for every $y'$.
 
If the range of $c_k$ is a line, $x'$ covers a straight line that depends on $y'$. The same argument applies along this line. Equation~\eqref{eq:project} holds with the integral taken along the line, and Lemma~\ref{lem:line} provides the positivity of the left-hand side. The rows of the map in Eq.~\eqref{eq:blocks} are linearly independent, so the ranges of $c_k$ and $d_k$ together span the plane, and the lines obtained for almost all $y'$ cover almost all lines of the parallel family. The exceptional lines form a set of measure zero in the plane, which does not affect the integrals in Eqs.~\eqref{eq:Omega} and \eqref{eq:Wk}. Again $\Theta^{\lambda}_k(x')$ is positive for almost all $x'$ and $\lambda$.
 
In both cases Eq.~\eqref{eq:Omega} shows that $\Omega_k(y,y')$ is positive for almost all $(y,y')$, and Eq.~\eqref{eq:Wk} shows that $W_k(y)$ is nonnegative for almost all $y$. The Wigner function of a state is continuous, so $W_k$ is nonnegative everywhere.
 
If $c_k$ vanishes, the symplectic conditions force $b_k$ to vanish as well. Alice's unitary then acts on $A$ and $A'$ separately. In Eq.~\eqref{eq:Wk} the integral over $y'$ becomes an integral over $x'$ at fixed $x$ and gives
\begin{align}
p_k W_k(y) &= \tr\,\mathcal{E}_k\!\left(\sigma^{\rho}_B(x)\otimes\tau_{B'}\right) \ge 0 .
\end{align}
The inequality holds because $\sigma^{\rho}_B(x)$ is positive. This completes the proof.
\end{proof}
 
The proof shows more than the statement. For every outcome in which Alice couples her modes to the catalyst, the whole final state of $AA'B'$ is remotely positive. A catalyst outside $\Pset$ can only be returned if such outcomes do not carry its resource.
 
The assumption of strict unsteerability enters through Lemma~\ref{lem:smear}. It guarantees that the noise that reaches the catalyst mode has full support, so that for each state $\rho_\lambda$ of Bob the point $x'$ sweeps a whole line or the whole plane. The Gaussian form of $\rho_{AB}$ is used only to obtain Eq.~\eqref{eq:smear}. This suggests that the theorem extends to remotely positive states that are not Gaussian but admit a decomposition of this form with a strictly positive kernel. We have not examined this case in detail and do not claim it.
 
On the boundary the smearing is not available, and the result cannot be obtained as a limit of strictly unsteerable states. Equation~\eqref{eq:return} is an exact condition on a given protocol and a given state, and it fails when the state is perturbed. For the state of Eq.~\eqref{eq:lossy} at transmission one half, $\sigma_B(x)$ is the density $W_A(x)$ times a coherent state whose amplitude is fixed by $x$. Bob's state and Alice's phase-space point are then locked together, the operators that are added in Eq.~\eqref{eq:chain} involve a different state of Bob for each $x'$, and the integral over $x'$ no longer factorizes as in Eq.~\eqref{eq:project}. We have neither a proof nor a counterexample for this case.
 
\textit{Preliminary measurements by Alice.} Suppose that Alice first measures some of her modes by homodyne or heterodyne detection, obtains a result $j$, and sends it to Bob, whose instrument may then depend on $j$. For each $j$ the remaining modes of Alice and the modes of Bob are in a Gaussian state. Its covariance matrix does not depend on $j$. Together with the term in Eq.~\eqref{eq:strict} it is a Schur complement of a positive definite Hermitian matrix, namely the matrix in Eq.~\eqref{eq:strict} with the covariance matrix of the measurement added on the measured modes. It is therefore positive definite, and the conditional states are strictly unsteerable. Lemma~\ref{lem:smear} applies to all of them with the same $\epsilon$. The proof above then goes through with the pair $(j,k)$ in place of $k$ and with the probability of $j$ included in the measure. Measurements by Alice at the end of the protocol need no argument, since they act on a Wigner-positive state.
 
\section{Several catalyst modes}\label{app:multi}
 
Let Alice hold $m$ catalyst modes, so that $x'$ is a point of $\R^{2m}$. The proof of Theorem~\ref{thm:nocat} uses the number of catalyst modes only in the rigidity step. There the operators $\Theta^{\lambda}_k(x')$ have aligned signs on cosets of the range of $c_k$, and we need the integral of the operator-valued Wigner function over such a coset to be positive.
 
A subspace of phase space is called coisotropic if it contains its symplectic complement. For a coisotropic subspace one can choose canonical coordinates such that the subspace consists of all coordinates of some modes and of the position coordinates of the remaining modes. The integral of the Wigner operator over a coset is then the identity on the first group of modes times a projector on a position eigenstate for each remaining mode. It is positive, and Lemma~\ref{lem:line} holds with lines replaced by cosets of the subspace. For a subspace that is not coisotropic the integral contains a Wigner operator of the untouched modes and is not positive.
 
Theorem~\ref{thm:nocat} therefore holds for $m$ catalyst modes if, for every outcome $k$, the range of the block $c_k$ in Eq.~\eqref{eq:blocks} is zero or coisotropic. This is the case when the range is the whole phase space of the catalyst modes, which is what a full-rank coupling gives when Alice has at least as many modes $A$ as catalyst modes. For a single catalyst mode every nonzero subspace is coisotropic, which is why no condition appears in the main text. The embezzling example of Remark~\ref{rem:emb} has a degenerate coupling. Only one of the $m$ catalyst modes is swapped with $A$, and the range of the block is the phase space of that one mode.
 
\end{document}